\numberwithin{equation}{section}
\newtheorem{thm}{Theorem}[section]
\newtheorem{rem}[thm]{Remark}
\title[]{Lower bounds to eigenvalues of one-electron Hamiltonians}
\author{Sohei Ashida}
\begin{document}
\maketitle

\begin{abstract}
A method for computing lower bounds to eigenvalues of sums of lower semibounded self-adjoint operators is presented. We apply the method to one-electron Hamiltonians. To improve the lower bounds we consider symmetry of molecules and use Temple's inequality. These methods would be useful in estimating eigenvalues of electronic Hamiltonians needed for studies of properties of molecules.
\end{abstract}

\section{Introduction}\label{firstsec}
In studies of properties of molecules, hypersurfaces called potential energy surfaces would be the most important subject. The potential energy surface is a graph of a certain function of positions $x_1,\dots,x_n\in \mathbb R^3$ of nuclei. The value of the function equals to an eigenvalue of a differential operator called electronic Hamiltonian. Therefore, for the studies of chemical properties of molecules accurate estimates for eigenvalues of electronic Hamiltonians are needed. The simplest kind of electronic Hamiltonians is one-electron Hamiltonian which is also important for estimating eigenvalues of electronic Hamiltonians with more than one electrons (cf. \cite{GHT,BG,Ga,BF2,Wi}). One-electron Hamiltonians are operators on $L^2(\mathbb R^3)$ and written as follows
\begin{equation}\label{myeq3.0}
A:=-\frac{1}{2}\Delta_y-\sum_{k=1}^n\frac{Z_k}{\lvert y-x_k\rvert},
\end{equation}
where $y\in\mathbb R^3$ is a position of an electron, the parameters $x_1,\dots,x_n\in\mathbb R^3$ are positions of nuclei and $Z_1,\dots,Z_n\in\mathbb N$ are atomic numbers of the nuclei.
The domain of $A$ is $D(A)=H^2(\mathbb R^{3})$ and $A$ is a lower semibounded self-adjoint operator (cf. \cite[Theorems X.12 and X.15]{RS}).  Upper bounds to eigenvalues of $A$ are obtained by variational methods. In this paper we develop methods to give lower bounds. 

For lower bounds we regard $A$ as a sum operators $A_k$:
$$A:=\sum_{k=1}^n A_k,$$
where
\begin{equation}\label{myeq1.1}
A_k:=-\frac{1}{2n}\Delta_y-\frac{Z_k}{\lvert y-x_k\rvert}.
\end{equation}

Using spectral projections of $A_k$ we estimate the operator $A$ from below by a sum of a finite dimensional operator $\tilde A$ and a constant. The eigenvalue problem of $\tilde A$ is reduced to a matrix eigenvalue problem. To improve the lower bound we use Temple's inequality. Because we need a lower bound to the second eigenvalue of $A$ greater than a variational upper bound to the first eigenvalue for Temple's inequality to be available, we improve the lower bound for the second eigenvalue considering symmetry of the molecule.

Section \ref{sec2} gives a lower bound procedure for sums of lower semibounded self-adjoint operators. In Section \ref{thirdsec} the method is applied to one electron Hamiltonians. Section \ref{fifthsec} describes improvements of the lower bounds by Temple's inequality and eigenvalue problems adapted to symmetry of molecules.

\section{Lower bound procedures}\label{sec2}
Let $\mathcal H$ be a separable complex Hilbert space with norm $\lVert u\rVert$ and inner product $\langle u,v\rangle$ (conjugate linear with respect to the first argument). Let $n$ be a natural number and  $A_k,\ k=1,\dots,n$ be self-adjoint operators with domains $D(A_k)$ dense in $\mathcal H$. We assume $A_k$ are lower semibounded, i.e. there exist $C_k>0$ such that $\langle A_ku,u\rangle\geq -C_k\lVert u\rVert^2$ for any $u\in D(A_k)$. We denote the essential spectra of $A_k$ by $\sigma_{ess}(A_k)$ and set $\lambda_{\infty}^k:=\inf\sigma_{ess}(A_k)$. We denote the lowest-ordered eigenvalues of $A_k$ by
$$\lambda^k_1\leq\lambda^k_2\leq\dotsm\leq\lambda^k_{\infty},$$
and the associated orthonormal eigenvectors by $\psi^k_1,\psi^k_2,\dotsm$.

Let $\sum_{k=1}^nA_k$ be essentially self-adjoint on $\bigcap_{k=1}^nD(A_k)$ and denote the self-adjoint extension by $A$. Set $\mu_{\infty}:=\inf\sigma_{ess}(A)$ and let
$$\mu_{1}\leq\mu_{2}\leq\dotsm\leq\mu_{\infty}$$
be eigenvalues of $A$.
Set $\lambda<\min\{\lambda_{\infty}^1,\lambda^{2}_{\infty},\dots,\lambda^n_{\infty}\}$. We label all $(k,l)$ satisfying $\lambda^k_l\leq\lambda$ by $i=1,\dots,M\in\mathbb N$ and write $\lambda^k_l=\lambda_i$, $\psi^k_l=\psi_i$. We denote $(k,l)$ corresponding to $i$ by $(k(i),l(i))$.  Set $\tilde \lambda_k:=\min\{\lambda^k_l: \lambda^k_l>\lambda\}$. We define a  matrix $\mathcal B$ by
$$\mathcal B_{ij}:=(\lambda_i-\tilde\lambda_{k(i)})\langle \psi_{i},\psi_j\rangle.$$
Let $\hat\mu_{1}\leq\hat\mu_{2}\leq\dotsm\leq\hat\mu_{M}$ be eigenvalues of $\mathcal B$.
\begin{thm}\label{lb}
 Assume $\psi_i$ are linearly independent. Then $\mathcal B$ is similar to a diagonal matrix and $\hat \mu_l\in\mathbb R,\ l =1,2,\dots,M$. Moreover, we have the following lower bounds for $\mu_i$.
$$\mu_l\geq\hat \mu_l+\sum_{k=1}^n\tilde\lambda_k,\ l=1,2,\dots,M.$$
\end{thm}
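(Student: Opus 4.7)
The plan is to establish the operator inequality $A \geq \sum_{k=1}^n \tilde\lambda_k + \tilde A$ with $\tilde A$ a finite-rank self-adjoint operator whose negative eigenvalues coincide with the $\hat\mu_l$, and then transfer this inequality to the $\mu_l$ via the min-max principle. For each $k$, the spectral theorem gives $A_k \geq \tilde\lambda_k I + \sum_{i:\, k(i)=k} (\lambda_i - \tilde\lambda_k)\, P_i$, where $P_i$ denotes the rank-one operator $v \mapsto \langle \psi_i, v\rangle \psi_i$; here the choice $\lambda < \lambda_\infty^k$ together with the definition of $\tilde\lambda_k$ ensures that no spectrum of $A_k$ lies in $(\lambda, \tilde\lambda_k)$. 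Summing over $k$ yields the desired inequality with $\tilde A := \sum_{i=1}^M (\lambda_i - \tilde\lambda_{k(i)})\, P_i$.

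To analyze $\mathcal B$, introduce the Gram matrix $G_{ij} := \langle \psi_i, \psi_j\rangle$ and the real diagonal matrix $D := \mathrm{diag}(\lambda_i - \tilde\lambda_{k(i)})$, so that $\mathcal B = DG$. Linear independence of the $\psi_i$ makes $G$ positive definite, and the identity $G^{1/2} \mathcal B\, G^{-1/2} = G^{1/2} D G^{1/2}$ exhibits $\mathcal B$ as similar to a Hermitian matrix, which gives the first two assertions of the theorem. A direct expansion shows that, in the basis $\{\psi_i\}$ of $\mathrm{span}\{\psi_1,\dots,\psi_M\}$, the restriction of $\tilde A$ to this subspace is represented by the matrix $\mathcal B$, so its eigenvalues there are exactly $\hat\mu_1,\dots,\hat\mu_M$, while $\tilde A$ vanishes on the orthogonal complement. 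Because $\lambda_i \leq \lambda < \tilde\lambda_{k(i)}$, the matrix $D$ is negative definite, and Sylvester's law of inertia applied to $G^{1/2} D G^{1/2}$ forces all $\hat\mu_l$ to be strictly negative.

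Finally, applying the min-max principle to the operator inequality above yields $\mu_l \geq \sum_{k=1}^n \tilde\lambda_k + \mu_l(\tilde A)$, and the negativity of all $\hat\mu_l$, together with the fact that $0$ is the only other point of $\sigma(\tilde A)$, identifies $\mu_l(\tilde A)$ with $\hat\mu_l$ for $l=1,\dots,M$, producing the stated bound. The step I expect to be most delicate is precisely this last identification: one must ensure that the $l$-th min-max value of $\tilde A$, viewed as an operator on all of $\mathcal H$, is $\hat\mu_l$ rather than $0$ (the infinitely degenerate essential eigenvalue of $\tilde A$), and this requires exactly the Sylvester-law argument guaranteeing $\hat\mu_l < 0$.
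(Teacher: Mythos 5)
Your proposal is correct and follows essentially the same route as the paper: the identical operator inequality $A\ge\tilde A+\sum_{k}\tilde\lambda_k$ with $\tilde A=\sum_{i}(\lambda_i-\tilde\lambda_{k(i)})P_i$, followed by the observation that $\mathcal B=DG$ is similar to the Hermitian matrix $G^{1/2}DG^{1/2}$ (the paper reaches the same matrix as $\mathcal G^{-1/2}\mathcal A\mathcal G^{-1/2}$). Your explicit Sylvester-law step showing $\hat\mu_l<0$ is a genuine improvement in rigor: it is exactly what justifies identifying the first $M$ min--max values of $\tilde A$ on all of $\mathcal H$ (whose essential spectrum is $\{0\}$) with the eigenvalues of $\tilde A$ restricted to $\mathrm{span}\{\psi_1,\dots,\psi_M\}$, a point the paper's proof passes over by writing the min--max only over that subspace without comment.
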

\begin{proof}
Let $E_k(\lambda)$ be the resolution of identity of $A_k$. Then the operator $A_k$ is decomposed as
$$A_k=E_k(\lambda)A_k+(1-E_k(\lambda))A_k.$$
Since on $\mathrm{Ran}\, (1-E_k(\lambda))$ the infimum of the spectra of $A_k$ is $\tilde \lambda_k$, the second term is estimated as
$$(1-E_k(\lambda))A_k\geq\tilde \lambda_k(1-E_k(\lambda)).$$
Therefore, we find
\begin{equation}\label{myeq2.1}
A_k\geq E_k(\lambda)A_k+(1-E_k(\lambda))\tilde\lambda_k=E_k(\lambda)(A_k-\tilde \lambda_k)+\tilde\lambda_k.
\end{equation}
Defining an operator $\tilde A$ by
$$\tilde A:=\sum_{k=1}^nE_k(\lambda)(A_k-\tilde \lambda_k),$$
\eqref{myeq2.1} gives
$$A\geq \tilde A+\sum_{k=1}^n\tilde\lambda_k.$$
Set $\tilde\mu_{\infty}:=\inf \sigma_{ess}(\tilde A)$ and let $\tilde\mu_{1}\leq\tilde\mu_{2}\leq\dotsm\leq\tilde\mu_{\infty}$ be eigenvalues of $\tilde A$.
Then we have
$$\mu_i\geq\tilde \mu_i+\sum_{k=1}^n\tilde \lambda_k,\ i=1,2,\dots.$$

Let us show
$\hat\mu_l=\tilde\mu_l,\ l=1,\dots,M.$
By min-max principle (cf. \cite[Theorem XIII.2]{RS}) $\tilde \mu_l$ is expressed as
\begin{equation}\label{myeq2.2}
\tilde \mu_l=\sup_{\varphi_1,\dotsm,\varphi_{l-1}\in\sum_{k=1}^nE_k(\lambda)\mathcal H}U_{\tilde A}(\varphi_1,\dotsm,\varphi_{l-1}),\ ,\ l=1,\dots,M,
\end{equation}
$$U_{\tilde A}(\varphi_1,\dotsm,\varphi_{l-1}):=\inf_{\substack{\psi\in\sum_{k=1}^nE_k(\lambda)\mathcal H,\ \lVert \psi\rVert=1 \\ \psi\in[\varphi_1\dotsm\varphi_{l-1}]^{\perp}}}\langle\psi,\tilde A\psi\rangle,$$
where $[\varphi_1\dotsm\varphi_{l-1}]^{\perp}$ is shorthand for $\{\psi: \langle \psi,\varphi_j\rangle=0,\ j=1,2,\dots,l-1\}$.
Here we note any $\psi\in \sum_{k=1}^nE_k(\lambda)\mathcal H$ can be written as $\psi=\sum_{i=1}^M c_i\psi_i$. Substituting this expression gives
$$U_{\tilde A}(\varphi_1,\dotsm,\varphi_{l-1})=\inf_{\substack{(\mathbf c,\mathcal G\mathbf c)=1\\ (\mathbf c,\mathcal G\mathbf{b}_j)=0,\ j=1,\dots,l-1}}(\mathbf c,\mathcal A\mathbf c),$$
where $\mathbf c:=^t(c_1,\dots,c_M)$, $\mathcal G_{ij}:=\langle \psi_i,\psi_j\rangle$, $\mathbf b_j=^t((b_j)_1,\dots,(b_j)_M)$ is defined by $\varphi_j=\sum_{i=1}^M( b_j)_i\psi_i$ and
$$\mathcal A_{ij}:=\sum_{s=1}^M(\lambda_s-\tilde\lambda_{k(s)})\langle\psi_i,\psi_s\rangle\langle\psi_s,\psi_j\rangle.$$
Here, $(\mathbf u,\mathbf v)$ is the usual inner product in $\mathbb C^M$.
The matrices $\mathcal A$ and $\mathcal G$ are Hermitian. Since $\psi_i$ are linearly independent, $\mathcal G$ is regular and $\mathcal G>0$. By a change of  bases in $\mathbb C^M$
\begin{equation}\label{myeq2.3}
U_{\tilde A}(\varphi_1,\dotsm,\varphi_{l-1})=\inf_{\substack{\lVert \tilde{\mathbf c}\rVert=1\\ (\tilde {\mathbf c},\tilde{\mathbf b_j})=0,\ j=1,\dots,l-1}}(\tilde{\mathbf c},\mathcal G^{-1/2}\mathcal A\mathcal G^{-1/2}\tilde{\mathbf c}),
\end{equation}
where $\tilde{\mathbf c}:=\mathcal G^{1/2}\mathbf c,\ \tilde{\mathbf b}_j:=\mathcal G^{1/2}\mathbf b_j$. By \eqref{myeq2.2} and \eqref{myeq2.3} we have
$$\tilde \mu_l=\sup_{\tilde{\mathbf b}_1\dots,\tilde{\mathbf b}_{l-1}}\inf_{\substack{\lVert \tilde{\mathbf c}\rVert=1\\ (\tilde{\mathbf c},\tilde{\mathbf b}_j)=0,\ j=1,\dots,l-1}}(\tilde{\mathbf c},\mathcal G^{-1/2}\mathcal A\mathcal G^{-1/2}\tilde{\mathbf c}),\ l=1,\dots,M.$$
The right-hand side is the expression by min-max principle for eigenvalues of $\mathcal G^{-1/2}\mathcal A\mathcal G^{-1/2}$. Thus $\tilde \mu_l,\ l=1,\dots,M$ are eigenvalues of $\mathcal G^{-1/2}\mathcal A\mathcal G^{-1/2}$. By the relation
$$\mathcal G^{-1/2}\mathcal G^{-1/2}\mathcal A\mathcal G^{-1/2}\mathcal G^{1/2}=\mathcal G^{-1}\mathcal A,$$
we can see that $\mathcal G^{-1}\mathcal A$ is similar to $\mathcal G^{-1/2}\mathcal A\mathcal G^{-1/2}$ and the eigenvalues of $\mathcal G^{-1}\mathcal A$ are $\tilde \mu_l,\ l=1,\dots,M$. Since $\mathcal G^{-1/2}\mathcal A\mathcal G^{-1/2}$ is Hermitian, $\tilde \mu_l$ are real and $\mathcal G^{-1}\mathcal A$ is similar to a diagonal matrix. Thus, if we show $\mathcal G^{-1}\mathcal A=\mathcal B$, we know that $\hat\mu_l=\tilde\mu_l$ and the theorem is proved.
Setting $\mathcal F:=\mathcal G^{-1}$, we have the obvious relation $\sum_s\mathcal F_{is}\mathcal G_{sj}=\delta_{ij}$ and $\mathcal A_{ij}=\sum_s(\lambda_s-\tilde\lambda_{k(s)})\mathcal G_{is}\mathcal G_{sj}$, where $\delta_{ij}$ is the Kronecker delta. These relations yield
$$(\mathcal G^{-1}\mathcal A)_{ij}=\sum_{s,t}(\lambda_s-\tilde\lambda_{k(s)})\mathcal F_{it}\mathcal G_{ts}\mathcal G_{sj}=(\lambda_i-\tilde\lambda_{k(i)})\mathcal G_{ij}=\mathcal B_{ij},$$
which completes the proof.
\end{proof}

\begin{rem}
The same result as the latter half of the theorem is used in \cite[Section 6]{GHT} to obtain lower bounds to eigenvalues of the electronic Hamiltonian of $\mathrm{H}_2^+$. Essentially the same result as the latter half has been obtained also by \cite{BF}, though the result is not formulated as bounds by matrix eigenvalues. The first half of the statement is new as far as the author knows.
\end{rem}

\section{Lower bounds for one-electron Hamiltonians}\label{thirdsec}
We apply Theorem \ref{lb} to one-electron Hamiltonians \eqref{myeq3.0}.
 The eigenvalues of $A_k$ in \eqref{myeq1.1} are $\lambda_j^k=-n/2j^2$. Since $\lambda_j^k$ are independent of $k$ we denote it by $\lambda_j$. The multiplicity of $\lambda_j$ is $j^2$. Since by dilation $U_n\psi(x):=n^{-3/2}\psi(nx)$, $A_k$ is transformed as 
\begin{align*}
&U_n^{-1}A_kU_n=n\left(-\frac{1}{2}\Delta_y-\frac{Z_k}{\lvert y-nx_k\rvert}\right)=n\tilde A_k,\\
&\tilde A_k:=-\frac{1}{2}\Delta_y-\frac{Z_k}{\lvert y-nx_k\rvert},
\end{align*}
the inner products of eigenfunctions of $A_k$ are the same as those of corresponding eigenfunctions of $\tilde A_k$ and eigenvalues of $A_k$ are $n$ times those of $\tilde A_k$.
The eigenvalue problem for $\tilde A_k$ is the one for Hydrogenlike-atoms and solvable (cf. \cite{EWK}). The eigenfunctions of $\tilde A_k$ are
\begin{equation}\label{myeq3.1}
\begin{split}
&\psi_{jlm}^k(y):=R_{jl}(Z_k,r_k)Y_{l,m}(\theta_k,\varphi_k),\\
&l=0,1,\dots,,j-1,\ m=-l,-l+1,\dots,l-1,l.
\end{split}
\end{equation}
where
$$R_{jl}(Z,r):=\sqrt{\frac{4(j-l-1)!}{j^4[(j+l)!]^3}}Z^{3/2}\rho^le^{-\rho/2}L_{j+l}^{2l+1}(\rho),\ \rho:=\frac{2Z}{j}r.$$
Here $(r_k,\theta_k,\varphi_k)$ is the polar coordinates of $y-nx_k$, $L_{j+l}^{2l+1}$ represents the associated Laguerre polynomial and  $Y_{lm}$ denotes the spherical harmonics
$$Y_{l,m}(\theta,\varphi):=\frac{1}{2}\sqrt{\frac{(2l+1)(l-\lvert m\rvert)!}{\pi(l+\lvert m\rvert)}}P_l^{\lvert m\rvert}(\cos\theta)e^{im\varphi},$$
where $P_l^{\lvert m\rvert}$ represents the associated Legendre polynomial. Since real functions are more convenient for the calculations of inner products of eigenfunctions, we use linear combinations of $Y_{l,m}$ and $Y_{l,-m}$ instead of $Y_{l,m}$ (cf. \cite{EWK}).

Let a function $\psi^k$ have the same form as in \eqref{myeq3.1}. For the calculation of inner product as $\langle\psi^k,\psi^l\rangle$ the origin of coordinates is placed at the center of the nuclei $k$ and $l$, and the direction of the $z$-axis is chosen to be the same as that of the line connecting the nuclei $k$ and $l$. We introduce spheroidal coordinates $\xi:=\frac{r_k+r_l}{R}$, $\eta:=\frac{r_k-r_l}{R}$, $\varphi:=\mathrm{arccos}(x/\sqrt{x^2+y^2})$, where $r_k$ and $r_l$ are the distances of the point $(x,y,z)$ from the nuclei $k$ and $l$ respectively, and $R$ is the distance between the nuclei $k$ and $l$ (cf. \cite{Ru}, \cite{Sl}). Then by the change of variables for any $f\in L^2(\mathbb R^3)$ we have
$$\int_{\mathbb R^3} f(x,y,z)dxdydz=\frac{R^3}{8}\int_1^{\infty}\left\{\int_{-1}^1\left\{\int_0^{2\pi}(\xi^2-\eta^2)f(\xi,\eta,\varphi)d\varphi\right\}d\eta\right\}d\xi.$$
By the change of variables we can calculate $\langle\psi^k,\psi^l\rangle$ as a function of $R$. Therefore, we can calculate the matrix elements of $\mathcal B$ in Theorem \ref{lb}.

The lower bounds to the first eigenvalues of \eqref{myeq3.0} with $n=2$, $Z_1=Z_2=1$ (i.e. the case of Hydrogen molecule ion $\mathrm H_2^+$) are listed in Table \ref{mytab}. 
The parameter $R$ indicates the distance between the tow nuclei. The lower bounds are calculated setting $\lambda\in [\lambda_j,\lambda_{j+1}),\ j=1,2,3$, where $\lambda$ is the real number which appears in the definition of $\mathcal B$ above the Theorem \ref{lb}. We also calculated the variational upper bounds $\mu_{1UB}$ to the eigenvalues using as trial functions
\begin{equation}\label{myeq3.2}
\psi=\exp (-\alpha R\xi/2)(1+\beta R^2\eta^2/4),
\end{equation}
and optimizing $\alpha$ and $\beta$ as in \cite{GHT}. Lower bounds $\mu_{1LB}$ by consideration of symmetry and Temple's inequality is explained in Section \ref{fifthsec}.
\begin{table}[H]
  \begin{center}
    \caption{The first eigenvalue bounds for $\mathrm{H}_2^+$}
    \setlength{\tabcolsep}{5pt}
    \begin{tabular}{cccccc} \hline
    \multicolumn{1}{c}{\multirow{2}{*}{R}}&\multicolumn{3}{c}{\begin{tabular}{c}Lower  Bounds\\  with $\lambda\in[\lambda_j,\lambda_{j+1})$\end{tabular}}&\multicolumn{1}{c}{\multirow{2}{*}{$\mu_{1UB}$}}&\multicolumn{1}{c}{\multirow{2}{*}{$\mu_{1LB}$}}\\
	    &\multicolumn{1}{c}{$j=1$}&\multicolumn{1}{c}{$j=2$}&\multicolumn{1}{c}{$j=3$}&\\ \hline
	    
	    0.2	& -1.9807& -1.9779&-1.9771&-1.9285&-1.9294\\ 
0.4	&-1.9285& -1.9187& -1.9156&-1.8005&-1.8016\\ 
0.6	&-1.8554& -1.8363& -1.8303&-1.6712&-1.6722\\ 
0.8	&-1.7729& -1.7441&-1.7351&-1.5542&-1.5550\\ 
1.0&-1.6898& -1.6520&-1.6404&-1.4515&-1.4522\\ 

1.4&-1.5425& -1.4897& -1.4742&-1.2840&-1.2846\\
1.8	& -1.4328& -1.3674& -1.3487&-1.1556&-1.1565\\
2.2	&-1.3592& -1.2812& -1.2590&-1.0551&-1.0567\\
2.6&-1.3129&-1.2219& -1.1958&-0.9751&-0.9779\\
3.0& -1.2853&-1.1817& -1.1515&-0.9103&-0.9152\\
4.0&-1.2576&-1.1313&-1.0934&-0.7941&-0.8084\\
6.0&-1.2503&-1.1122&-1.0678&-0.6678&-0.7145\\ \hline
    \end{tabular}
    \label{mytab}
  \end{center}
\end{table}
To show that Theorem \ref{lb} is applicable to more complicated systems, the lower bounds to the first eigenvalue for $n=3$, $Z_1=Z_2=Z_3=1$ (i.e. for $\mathrm{H}_3^{2+}$) are listed in Table \ref{mytab2}. Although  we can calculate for any positions of $x_1,x_2,x_3$, the values only for equilateral triangles are shown. $R$ indicates the length of the sides of the triangles.
 \begin{center}
     \setlength{\tabcolsep}{5pt}
	\tablefirsthead{\hline
	\multicolumn{1}{c}{\multirow{2}{*}{R}}&\multicolumn{3}{c}{\begin{tabular}{c}Lower  Bounds\\  with $\lambda\in[\lambda_j,\lambda_{j+1})$\end{tabular}}&\multicolumn{1}{c}{\multirow{2}{*}{$\mu_{1UB}$}}\\
	    &\multicolumn{1}{c}{$j=1$}&\multicolumn{1}{c}{$j=2$}&\multicolumn{1}{c}{$j=3$}\\ \hline}
	\tablehead{\hline}
	\tabletail{\hline}
	\topcaption{The first eigenvalue bounds for $\mathrm{H}_3^{2+}$}
	\tablelasttail{\hline}
	\begin{supertabular}{ccccc} 
0.2	& -4.3739& -4.3565&-4.3509&-4.1169\\
0.4	& -4.0662& -4.0104&-3.9928&-3.5919\\ 
0.6& -3.6931& -3.5987&-3.5703&-3.1556\\ 
0.8	& -3.3359& -3.2135&-3.1786&-2.8094\\
1.0	& -3.0342& -2.8937&-2.8558&-2.5327\\
1.4	& -2.6238& -2.4521&-2.4089&-2.1215\\
1.8& -2.4138& -2.1911&-2.1355&-1.8323\\
2.2	& -2.3177& -2.0310&-1.9575&-1.6183\\
2.6& -2.2768& -1.9339&-1.8422&-1.4540\\
3.0&-2.2603& -1.8798&-1.7735&-1.3243\\
3.4&-2.2538& -1.8531&-1.7365&-1.2198\\
3.8&-2.2514& -1.8412&-1.7172&-1.1342\\
4.2&-2.2505& -1.8364&-1.7066&-1.0633\\ \hline
  \end{supertabular} 
  \label{mytab2}
  \end{center}

\section{Improvements of Lower bounds by symmetry and Temple's inequality}\label{fifthsec}
\subsection{Temple's inequality}

If we have a lower bound $\mu_{2LB}$ to the second eigenvalue of a lower semibounded self-adjoint operator $A$, then we can obtain a lower bound $\mu_{1LB}$ to the first eigenvalue $\mu_1$ of $A$ by Temple's inequality. Let $\psi\in D(A)$ satisfy
\begin{equation}\label{myeq5.1}
\langle \psi,A\psi\rangle<\mu_{2LB}.
\end{equation}
 Then the Temple's inequality
\begin{equation}\label{myeq5.1.1}
\mu_1\geq \langle \psi,A\psi\rangle-\frac{\langle \psi,A^2\psi\rangle-\langle \psi,A\psi\rangle^2}{\mu_{2LB}-\langle \psi,A\psi\rangle},
\end{equation}
holds (see e.g. \cite{Te,Ka,GHT,BG}). For $\psi$ satisfying \eqref{myeq5.1} to exists, $\mu_{2LB}$ must be greater than $\mu_1$. Moreover, since $\langle \psi,A^2\psi\rangle-\langle \psi,A\psi\rangle^2=\langle \psi,(A-\langle \psi,A\psi\rangle)^2\psi\rangle\geq 0$, the larger $\mu_{2LB}$ is, the larger the right-hand side of \eqref{myeq5.1.1} is. An efficient method to make $\mu_{2LB}$ large is to restrict the function space in which we seek eigenfunctions to symmetric function spaces with respect to some transformations.

\subsection{Symmetry-adapted eigenvalue problem}
For a one-electron Hamiltonian \eqref{myeq3.0} an important symmetry is the one with respect to orthogonal transformation in $\mathbb R^3$ which maps $\{ x_1,\dots, x_n\}$ onto itself. The set $G$ of all such orthogonal transformations $g$ compose a group called a point group (see e.g. \cite{Le}). For each $g$ we can define a unitary operator $U(g)$ by $(U(g)\psi)(x)=\psi(g^{-1}x),\ \psi\in L^2(\mathbb R^3)$. Let $\varphi$ be an eigenfunction of $A$.  Then $U(g)\varphi$ is also an eigenfunction associated with the same eigenvalue, for $U(g)$ commutes with $A$. Thus, the set of all eigenfunctions associated with the same eigenvalue of $A$ forms a representation space of $G$. Therefore, restricting function space $L^2(\mathbb R^3)$ to the set of functions associated with one irreducible representation we can exclude  eigenvalues of $A$ associated with the other representations. As a result, in such a restricted function space the eigenvalue $\mu_2$ would be greater than the one in $L^2(\mathbb R^3)$. Since the eigenfunction $\psi_1$ associated with the first eigenvalue $\mu_1$ is expected to satisfy $U(g)\psi_1=\psi_1$, we consider the eigenvalue problem of $A$ in the function space $X:=\{ \psi\in L^2(\mathbb R^3): U(g)\psi=\psi,\ \forall g\in G\}$ corresponding to trivial representation.

The transformation $U(g)$ induces a corresponding transformation $\mathcal U(g)$ by
$$(\mathcal U(g)\mathbf c)_i:=\sum_{j,l=1}^M\mathcal F_{ij}\langle \psi_j,U(g)\psi_l\rangle c_l$$
in the vector space $\mathbb C^M$ on which the matrix $\mathcal B$ of Theorem \ref{lb} is a linear transformation, where $\mathcal F$ is the inverse matrix of $[\langle \psi_i,\psi_j\rangle]$. The matrix $\mathcal U(g)$ is a representation of $G$ in $\mathbb C^M$. To obtain the lower bound to the second eigenvalue $\mu_2$ of $A$ in $X$, we need to find the second eigenvalue of $\mathcal B$ in $\hat X:=\{\mathbf{c}\in \mathbb C^M: \mathcal U(g)\mathbf c=\mathbf c,\ \forall g\in G\}$. Since we consider only a finite group $G$, by Maschke's theorem (see e.g. \cite{Cr}) the representation $\mathcal U(g)$ is a direct sum of irreducible representations. If $\rho(g)$ and $\tilde {\rho}(g)$ are two different irreducible representation of $G$, then it is well known that $\sum_{g\in G}\rho_{ij}(g)\tilde\rho_{kl}(g)=0$ for any element $\rho_{ij}(g)$ and $\rho_{kl}(g)$ of $\rho(g)$ and $\tilde\rho(g)$. Taking as $\rho (g)$ the trivial representation i.e., $\rho_{11}(g)=1,\ \forall g\in G$, one finds $\sum_{g\in G}\tilde\rho_{kl}(g)=0$ for any irreducible representation $\tilde \rho(g)$ different from the trivial representation. Thus we can see that
\begin{equation}\label{myeq5.2}
\mathcal P:=\frac{1}{\lvert G\rvert}\sum_{g\in G}\mathcal U(g),
\end{equation}
is the projection on $\hat X$ corresponding to the trivial representation and $\mathcal P\mathbf c=0$ for $\mathbf c$ associated with the other representation, where $\lvert G\rvert$ is the order of $G$. Therefore, if we consider the eigenvalue problem for $\mathcal B\mathcal P$ instead of $\mathcal B$, the lower bound to $\mu_2$ in $X$ is obtained.
\begin{rem}
It is stated in \cite[Lemma 3.2.2]{We} that $\mathcal P$ in \eqref{myeq5.2} is a projection onto $\hat X$ in the sense that $\mathcal P^2=\mathcal P$ and $\mathrm{Ran}\, P=\hat X$. However, it is not enough for our purpose, because $\mathrm{Ran}\, (1-\mathcal P)$ is not specified. In the above arguments we can see that $\mathrm{Ran}\, (1-\mathcal P)$ is the subspace of $\mathbb C^M$ corresponding to the other irreducible representation of $G$ than the trivial representation.
\end{rem}

\subsection{Application of symmetry and Temple's inequality}
We apply the method in the previous subsections to $\mathrm{H}_2^+$. The point group of $\mathrm{H}_2^+$ is $D_{\infty h}$ (see e.g. \cite{Le}). However, $D_{\infty h}$ contains an infinite subgroup: rotation about the axis passing through the two nuclei. Since we need a finite group, we consider the finite subgroup $D_{2 h}$ of $D_{\infty h}$. The elements of $D_{2 h}$ are identity, inversion through the center of the nuclei, rotation by $\pi$ radians about three axes, reflection in three planes. The lower bounds to the second eigenvalues $\mu_2$ of $A$ in $L^2(\mathbb R^3)$ and $X$ are listed in Table \ref{mytab3}. The real number $\lambda$ in the definition of $\mathcal B$ is taken from $[\lambda_2,\lambda_3)$.

  \begin{center}
     \setlength{\tabcolsep}{5pt}
   \tablefirsthead{\hline
    \multicolumn{1}{c}{\multirow{2}{*}{R}}&\multicolumn{1}{c}{\begin{tabular}{c}$\mu_{2LB}$ \\in $L^2(\mathbb R^3)$\end{tabular}}&\multicolumn{1}{c}{\begin{tabular}{c}$\mu_{2LB}$ \\in $X$\end{tabular}}\\ \hline}
    \tablehead{\hline}
	\tabletail{\hline}
	\topcaption{The second eigenvalue bounds for $\mathrm{H}_2^{+}$}
	\tablelasttail{\hline}
	\begin{supertabular}{ccc}
0.2	& -0.5042& -0.4991\\
0.4	& -0.5199&-0.4964\\
0.6& -0.5563& -0.4923\\
0.8	& -0.6181& -0.4869\\
1.0	& -0.6954& -0.4807\\
1.4	& -0.8465& -0.4674\\
1.8& -0.9597& -0.4555\\
2.2	& -1.0318& -0.4469\\
2.6& -1.0731& -0.4413\\
3.0&-1.0947& -0.4373\\
4.0&-1.1107& -0.4269\\
6.0&-1.1114& -0.3971\\ \hline
 \end{supertabular} 
    \label{mytab3}
  \end{center}

As expected, the lower bounds $\mu_{2LB}$ to eigenvalues of $A$ in $X$ are greater than the ones in $L^2(\mathbb R^3)$ and therefore, they are lower bounds to eigenvalues corresponding to different irreducible representation of $D_{2 h}$. We also confirmed that the first eigenvalue of $\mathcal B\mathcal P$ is the same as the one of $\mathcal B$ which implies that the eigenfunction associated with the first eigenvalue belongs to the trivial representation. The lower bounds $\mu_{2LB}$ in $X$ are greater than $\mu_{1UB}$ in Table \ref{mytab}, though the ones in $L^2(\mathbb R^3)$ are not for large $R$. Thus we can use them in Temple's inequality \eqref{myeq5.1.1}. The lower bounds $\mu_{1LB}$ in Table \ref{mytab} are obtained by this way using function $\psi$ of the form \eqref{myeq3.2} and optimizing the parameters $\alpha$ and $\beta$.

\begin{rem}
\cite{GHT} applied Temple's inequality to the lower bounds to the first eigenvalues of $\mathrm{H}_2^+$ with $\mu_{2LB}=-0.5$ which equals to the second eigenvalue $\mu_2$ of $A$ in $X$ for $R=0$ and smaller than $\mu_2$ for $R>0$, because $\mu_2$ is increasing as a function of $R$.
\end{rem}

\end{document}